\newcolumntype{C}{>{$\displaystyle}c<{$}} 
\setlist[enumerate,1]{label=(\arabic*),ref=\arabic*}
\definecolor{blue}{rgb}{0,0.2,1}
\definecolor{red}{rgb}{0.9,0,0}
\newtheorem{theorem}{Theorem}
\newtheorem{lemma}[theorem]{Lemma}
\newcommand{\vect}[1]{\boldsymbol{#1}}
\begin{document}

\title{Analog quantum simulation of parabolic partial differential equations using Jaynes-Cummings-like models}

\author{Shi Jin}
\affiliation{Institute of Natural Sciences, Shanghai Jiao Tong University, Shanghai 200240, China}
\affiliation{Ministry of Education Key Laboratory in Scientific and Engineering Computing, Shanghai Jiao Tong University, Shanghai 200240, China}
\affiliation{School of Mathematical Sciences, Shanghai Jiao Tong University, Shanghai, 200240, China}
\author{Nana Liu}
\email{nana.liu@quantumlah.org}
\affiliation{Institute of Natural Sciences, Shanghai Jiao Tong University, Shanghai 200240, China}
\affiliation{Ministry of Education Key Laboratory in Scientific and Engineering Computing, Shanghai Jiao Tong University, Shanghai 200240, China}
\affiliation{School of Mathematical Sciences, Shanghai Jiao Tong University, Shanghai, 200240, China}
\affiliation{University of Michigan-Shanghai Jiao Tong University Joint Institute, Shanghai 200240, China.}

\date{\today}

\begin{abstract}
We present a simplified analog quantum simulation protocol for preparing quantum states that embed solutions of parabolic partial differential equations, including the heat, Black-Scholes and Fokker-Planck equations. The key idea is to approximate the heat equations by a system of hyperbolic heat equations that involve only first-order differential operators. This scheme requires relatively simple interaction terms in the Hamiltonian, which are the electric and magnetic dipole moment-like interaction terms that would be present in a Jaynes-Cummings-like model. For a $d$-dimensional problem, we show that it is much more appropriate to use a single $d$-level quantum system - a {\it qudit} - instead of its qubit counterpart, and $d+1$ qumodes. The total resource cost is efficient in $d$ and precision error, and has potential for realisability for instance in cavity and circuit QED systems. 
\end{abstract}

\maketitle 
\section{Introduction}
Some of the most important problems in scientific computing involve numerically solving partial differential equations (PDEs). However, classical computational methods for high-dimensional and large-scale PDEs often suffer from the curse-of-dimensionality or prohibitive computational costs.  Although many quantum algorithms for PDEs have been proposed to try to mitigate such computational challenges, most of these algorithms require discretisation of the PDE and a decomposition of the required evolution into many gates. This is in the framework of large-scale digital quantum simulation and computation, making it inaccessible to near-term quantum devices. \\

Unlike digital quantum simulation, analog quantum simulation is a much more near-term quantum technology. It runs in continuous time, and the simulation is executed using the real-time dynamics of the quantum system. Since PDEs are intrinsically analog because they typically describe dynamics in space and time which have continuous degrees of freedom, it is reasonable to find quantum protocols for PDEs using analog quantum simulation instead, and using continuous-variable quantum systems. Recently, such a method was proposed by the authors  to allow the analog quantum simulation of {\it any} linear PDE and certain nonlinear PDEs \cite{2023analogPDE}, for an arbitrary number of dimensions.  For example, for some first-order linear PDEs, it is already within the reach of current devices to simulate these PDEs with extremely high dimensions. Such a mapping to an analog quantum simulation setting is made due to a mathematical transformation of any linear PDE into a Schr\"odinger-type PDE in one higher dimension, which is called \textit{Schr\"odingerisation} \cite{jin2212quantum, technicalschro, 2023analogPDE}.  Schrodingerisation allows an exact mapping between any continuous linear PDEs into continuous Schr\"odinger-like equations without discretisations or approximations, thus it allows one to retain the continuity of PDEs in the corresponding quantum simulation setting as well. No other existing quantum methods can achieve this for general linear PDEs. \\

This is reminiscent of the situation with early classical computation. The earlier classical computational devices to solve PDEs were in fact analog classical devices, and these analog devices dominated the field before digital classical devices became more robust in the 1970's. We are in a similar situation today, where digital quantum simulation and computation are not at the stage where they are large and reliable enough to simulate PDEs meaningfully, whereas analog quantum simulation is already available. We claim that even in the absence of fault-tolerant digital quantum devices, it is still conceivable to simulate certain large-scale PDEs, using analog quantum simulation.\\

An important class of second-order PDEs are parabolic PDEs, which includes equations like the heat, Black-Scholes and Fokker-Planck equations, among others. However, the direct application of Schr\"odingerisation requires the presence of at least third-order non-Gaussian terms in the Hamiltonian \cite{2023analogPDE}, like $\hat{x}^2 \otimes \hat{\eta}$, where $\hat{x}$ and $\hat{\eta}$ are quadrature operators. While in principle many quantum systems have these interaction terms, it remains challenging for existing devices. These PDEs underlie the key dynamics behind many real-world systems. The heat equation is the archetypal parabolic PDE modelling diffusion, and serves as the basis for variants like the Black-Scholes and Fokker-Planck equations, which involve both diffusion and convection. The Black-Scholes equation \cite{black1973pricing} is one of the most important PDEs in mathematical finance and the Fokker-Planck equation \cite{risken1996fokker} has many applications in physics, chemistry, biology and even machine learning. The reason why at least third-order non-Gaussianity is required in the Hamiltonian is related to the fact that second-order differential operators are present in these parabolic PDEs. \\

To obviate this issue, in this paper we propose an alternative method that approximates the second-order parabolic PDEs by a system of first-order hyperbolic  PDEs. These are the so-called hyperbolic heat equations. These equations approximate diffusion by underlying transport equations with large characteristic speeds and strong damping. The application of Schr\"odingerisation to this alternative formulation shows that the necessary interaction terms in the Hamiltonian for analog quantum simulation are basically the electric and magnetic dipole-like interaction terms that would be present in Jaynes-Cummings-like models. The Jaynes-Cummings model is the prototypical model for light-matter interaction \cite{shore1993jaynes}. For one-dimensional problems, analog quantum simulation is already plausible on existing physical platforms with cavity and circuit QED.  \\

This formulation also illustrates the importance of using {\it qudit} systems ($d$-level quantum systems) \cite{wang2020qudits, thew2002qudit} rather than qubit ($d=2$) systems. By using a single $d$-level qudit and $d+1$ qumodes with each external qumode driving transitions between only two levels of the qudit at a time, we can simulate $d$-dimensional heat, Black-Scholes and Fokker-Planck equations. Although we can easily reformulate this as interactions between $d+1$ qumodes and $\log_2 d$ qubits, the interaction Hamiltonian in the qubit framework is much less natural and unlike for qudits, these interactions are not feasible in the near-term. \\

We begin by briefly summarising some necessary background material, including Schr\"odingerisation,  in Section~\ref{sec:background}. Then we discuss the heat equation example in more depth in Section~\ref{sec:heat}, which serves as our representative example. We then go on to discuss applications to Black-Scholes equation in Section~\ref{sec:blackscholes}, both the one-dimensional and the multidimensional versions. Our final example is the Fokker-Planck equation in Section~\ref{sec:fokkerplanck}. We summarise in Section~\ref{sec:summary} and very briefly discuss some possible platforms. See Figure 1 for a schematic circuit of the protocol and see Table~\ref{tab:label} for a summary of the basic interaction terms required for the different PDEs. The more general parabolic PDE is discussed in Appendix~\ref{sec:heatcross}. 

\section{Background} \label{sec:background}

A system of $K$ linear PDEs in $d$ spatial dimensions that is first-order in time and $L^{\text{th}}$-order in space can be written as 
\begin{align}\label{eq:generalpde}
   &  \frac{\partial u_k(t,x)}{\partial t}=F_k(t, \partial^n u_l(t,x)/\partial x_j^n), \qquad t\geq 0, \qquad x=(x_1, \cdots, x_d)\in \mathbb{R}^d, \nonumber \\
   &j=1,...,d, \qquad k, l=0, 1, \cdots, K-1, \qquad n=0, 1, \cdots, L,
    \end{align}
where $F_k$ are linear functions and $\partial^0 u/\partial x^0$ denotes $u$. For simplicity we do not consider inhomogeneous terms here (terms not dependent on $u$), but this can be included in a straightforward way, for example see \cite{2023analogPDE}. We can embed all $K$ functions into the following $K$-dimensional vector 
\begin{align} \label{eq:wrep}
    w(t,x)=\begin{pmatrix}
        u_0(t,x) \\
        \cdots\\
        u_{K-1}(t,x))
    \end{pmatrix}=\sum_{k=0}^{K-1} u_k(t,x) |k\rangle.
\end{align}
The eigenbasis $\{|k\rangle\}_{k=0}^{K-1}$ over a complex number field spans a $K$-dimensional Hilbert space and can be considered as basis states of a \textit{qudit}. This is a $d$-level quantum system, like an atom with $d$ energy levels or with spin $s$ and $d=2s+1$. For example, $|k=0\rangle$ can represent the ground state of an atom. A qudit can also be decomposed into $\log_2 d$ \textit{qubits}, where one qubit is spanned by the two-dimensional eigenbasis $\{|0\rangle, |1\rangle\}$. A qubit is just a special case of a qudit with $2$ levels. A $3$-level qudit is called a qutrit and a $4$-level qudit is called a ququart. Although the two representations, qudits and qubits, can be made mathematically equivalent, they are physically distinct. What may be a more straightforward evolution in one system may be more difficult in the other. We will encounter this in the next sections. \\

Each $u_k(t,x)$ function can in fact be embedded into an infinite-dimensional vector
\begin{align} \label{eq:ukrep}
    \vect{u}_k(t)=\int u_k(t,x)|x\rangle dx, 
\end{align}
where $\int=\int_{-\infty}^{\infty}$ unless otherwise specified. The space spanned by the eigenbasis $\{|x\rangle\}_{x \in \mathbb{R}^d}$ over the field of complex numbers on the other hand cannot be made mathematically equivalent to either the qubit or qudit descriptions unless the values of possible $x$ is truncated. It spans an infinite-dimensional Hilbert space and can be interpreted as the basis states of a continous-variable quantum state or \textit{qumode}. This is the quantum analogue of a continuous classical degree of freedom, like position, momentum or energy before being quantised. The Hilbert space spanned by $\{|x\rangle\}_{x \in \mathbb{R}^d}$ describes states consisting of $d$ qumodes. \\

Since this is an infinite-dimensional space, it is acted upon by infinite-dimensional operators. The key operators of most interest to us are the quadrature operators of a qumode $\hat{x}, \hat{p}$ where $[\hat{x}, \hat{p}]=i\mathbf{1}_x$. Here their eigenvectors are denoted respectively by $|x\rangle$ and $|p\rangle$ where $\langle x|p\rangle=\exp(ixp)/\sqrt{2\pi}$ and $\int dx |x\rangle \langle x|=\mathbf{1}_x=\int dp |p\rangle \langle p|$. The $|x\rangle$ and $|p\rangle$ eigenstates are known as the position and momentum eigenstates. 
For a system of $d$-qumodes, we can define the position/momentum operator only acting on the $j^{\text{th}}$ mode as 
 \begin{align}
        \hat{p}_j=\mathbf{1}_x^{\otimes j-1}\otimes \hat{p} \otimes \mathbf{1}_x^{\otimes d-j}, \qquad \hat{x}_j=\mathbf{1}_x^{\otimes j-1}\otimes \hat{x} \otimes \mathbf{1}_x^{\otimes d-j}, \qquad [\hat{x}_j,\hat{p}_k]=i\delta_{jk}\mathbf{1}_x.
    \end{align}
These quadrature operations allow us to make the identification \cite{2023analogPDE}
\begin{align} \label{eq:rules}
    & \int f(x) u_k(t,x)|x\rangle dx \rightarrow f(\hat{x})\vect{u}_k(t) \nonumber \\
    & \int \frac{\partial^n u_k(t,x)}{\partial x_j^n}|x\rangle dx \rightarrow (i\hat{p}_j)^n \vect{u}_k(t).
\end{align}
Now taking together Eqs.~\eqref{eq:wrep} and~\eqref{eq:ukrep}, we see that Eq.~\eqref{eq:generalpde} describes the linear time evolution of the vector
\begin{align}
    \vect{w}(t)=\sum_{k=0}^{K-1} \vect{u}_k(t)=\sum_{k=0}^{K-1} \int u_k(t,x)|k\rangle |x\rangle dx, 
\end{align}
which can be considered a hybrid state consisting of both a $K$-level qudit and $d$ qumodes. To be considered a genuine quantum state $|w(t)\rangle$, the vector needs to be normalised
\begin{align}
    |w(t)\rangle=\frac{\vect{w}(t)}{\|\vect{w}(t)\|}, \qquad \|\vect{w}(t)\|^2=\sum_{k=0}^{K-1} \|\vect{u}_k(t)\|^2=\sum_{k=0}^{K-1} \int |u_k(t,x)|^2 dx. 
\end{align}
For simplicity of notation, we ignore the effects of normalisation until the end. Applying the rules in Eq.~\eqref{eq:rules} to the general system in Eq.~\eqref{eq:generalpde}, we see that the evolution of $\vect{w}(t)$ can be written 
\begin{align} \label{eq:wequationfull}
    & \frac{d \vect{w}(t)}{dt}=-i\vect{A}(t) \vect{w}(t), \qquad   \vect{A}(t)\vect{w}(t)=i\mathbf{F}(t, |l\rangle \langle l| \otimes (i\hat{p}_j)^n)\vect{w}(t), \qquad \vect{A}(t)=\vect{A}_1(t)-i\vect{A}_2(t),  \nonumber \\
    & \vect{A}_1=\frac{1}{2}(\vect{A}(t)+\vect{A}^{\dagger}(t))=\vect{A}_1^{\dagger}, \qquad \vect{A}_2=\frac{i}{2}(\vect{A}(t)-\vect{A}^{\dagger}(t))=\vect{A}_2^{\dagger}. 
\end{align}
Since $\vect{A}$ is in general not Hermitian, the above evolution is not unitary. Thus to simulate Eq.~\eqref{eq:wequationfull} on an analogue quantum simulator, we need to turn this into a Schr\"odinger-like equation. This is made possible by the method of \textit{Schr\"odingerisation} \cite{2023analogPDE}. We only summarise the procedure here and the reader can refer to \cite{2023analogPDE, jin2212quantum, technicalschro} for more details and justification. Here we only need to augment the original system $\vect{w}(t)$ by a single ancilla qumode and we simulate instead $\tilde{\vect{v}}(t)$ which obeys the following unitary evolution generated by the Hamiltonian $\vect{H}(t)$
\begin{align} \label{eq:vtilde0}
   & \frac{d \tilde{\vect{v}}(t)}{dt}=-i (\vect{A}_2 \otimes \hat{\eta}+\vect{A}_1 \otimes \mathbf{1}_{\eta})\tilde{\vect{v}}(t)=-i \vect{H} \tilde{\vect{v}}(t), \qquad \vect{H}=\vect{A}_2\otimes \hat{\eta}+\vect{A}_1 \otimes \mathbf{1}_{\eta}=\vect{H}^{\dagger}, \nonumber \\
   & \tilde{\vect{v}}(0)=\vect{w}(0) \int \frac{2}{1+\eta^2}|\eta\rangle d \eta = \vect{w}(0)\int e^{-|\xi|}|\xi\rangle d\xi= \vect{w}(0)|\Xi\rangle, \qquad |\Xi\rangle=\int e^{-|\xi|}|\xi\rangle d\xi,
\end{align}
where $\hat{\xi}, \hat{\eta}$ are also conjugate quadrature pairs obeying $[\hat{\xi}, \hat{\eta}]=i\mathbf{1}_{\eta}$. 
To retrieve the state $|w(t)\rangle$ from $|\tilde{v}(t)\rangle=\tilde{\vect{v}}(t)/\|\tilde{\vect{v}}(t)\|$, we apply the projective operator $\hat{P}_{>0}=\mathbf{1} \otimes \int_0^{\infty} g(\eta)|\eta \rangle \langle \eta| d \eta$. Here $g(\eta)$ can be a smoothing function if the quadrature measurement is not precise. When the measurement result of the ancilla qumode reads $\eta>0$, then the rest of the system is in the desired state $|w(t)\rangle$. The probability of success of this desired measurement results scales as $O(\|\vect{w}(t)\|^2/\|\vect{w}(0)\|^2)$. \\

 While $|\Xi\rangle$ is an unusual state, it is possible to approximate this state for instance by the experimentally accessible Gaussian state $|G\rangle=\int (1/(\sqrt{s}\pi^{1/4}))\exp(-\xi^2/(2s^2))|\xi\rangle d\xi$ with squeezing parameter $s$. Then the quantum fidelity between $|G\rangle$ and $|\Xi\rangle$ is $|\langle \Xi|G\rangle|=\sqrt{2s}\exp(s^2/2)\pi^{1/4}\text{Erfc}(s/\sqrt{2})$. The maximum fidelity $|\langle \Xi|G\rangle| \approx 0.986$ is obtained at $s \approx 0.925$, which is almost near unity. For a discussion on the error propagation due to using the Gaussian ancilla state, see \cite{2023analogPDE}.\\

For simplicity of notation, in the following sections, we ignore the explicit time-dependence in $F_k$ or $\vect{A}(t)$. Since Schr\"odingerisation works in these non-autnomous settings, the extension is straightforward. \\

This method has been applied to many PDEs, including the advection, heat, Black-Scholes, Fokker-Planck, linear Boltzmann, Maxwell's equations and nonlinear Hamilton-Jacobi and scalar hyperbolic equations \cite{2023analogPDE, jin2212quantum, technicalschro, jin2023quantummaxwell, FPquantum}. Although direct analog quantum simulation for these equations is in principle possible, there are some key difficulties. For example, for the second-order PDEs like the heat, Black-Scholes and Fokker-Planck equations, the simplest cases require at least third-order nonlinearity in the Hamiltonian, which means interaction terms of the kind $\hat{x}^2 \otimes \hat{\eta}$ \cite{2023analogPDE}. Although quantum systems like superconducting circuits in principle has these terms, they may be harder to control and the coupling can be weak. We ask if it's possible to approximate solutions to these equations in a different way with interaction terms which are more readily available in the current laboratory settings. While still using Schr\"odingerisation, we will see in the following sections that a reformulation of second-order PDEs into a system of first-order PDEs will give us Hamiltonian interaction terms that correspond to the electric and magnetic dipole interaction terms present in Jaynes-Cummings models. 

\section{The heat equation} \label{sec:heat}

We first present an analog quantum simulation method for the one-dimensional heat equation to show that proof-of-concept experiments require only electric dipole and magnetic dipole moment-like interactions between a two-level system and a qumode (e.g. atom-light interaction). Any system that can be used to simulate a Jaynes-Cummings model can be considered as candidates. This one-dimensional example can also provide the basis for the quantum simulation of the Black-Scholes equation in the simplest scenario. \\

We then generalise to the $d$-dimensional heat equation and show how it is most suitable for quantum systems that can realise qudits with $d+1$ levels. These examples are key to consider more general applications like multi-dimensional Black-Scholes equation and the Fokker-Planck equation. We see that the forms of the interactions are still relatively simple and each interaction term is only between one qudit and one qumode.

\subsection{The one-dimensional heat equation}

The one-dimensional heat equation for $\tilde{u}(t,x)$ is a second-order PDE
\begin{align}\label{heat-1d}
    \frac{\partial \tilde{u}}{\partial t}=k\frac{\partial^2 \tilde{u}}{\partial x^2}, \qquad x \in \mathbb{R}, \qquad k>0,
\end{align}
where we assume $k$ is a constant. 
We will approximate the solution with the following system of first orrder PDEs.
Let $u, v: \mathbb{R} \times \mathbb{T} \rightarrow \mathbb{R}$ be real scalar functions with argument $(x, t) \in \mathbb{R} \times \mathbb{T}$ and $\mathbb{T}=[0, T]$. Then we can define the one-dimensional \textit{hyperbolic heat equation, or Goldstein-Taylor model} \cite{Gold, Taylor, Lions} 
\begin{align}
    & \frac{\partial u}{\partial t}=-\frac{1}{\epsilon}\frac{\partial v}{\partial x} \label{eq:first} \\
    & \frac{\partial v}{\partial t}=-\frac{1}{\epsilon}\frac{\partial u}{\partial x}-\frac{1}{ k\epsilon^2} v, \qquad 0<\epsilon \ll 1.  \label{eq:second}
\end{align}
Unlike the heat equation, which is a parabolic PDE, this is a {\it hyperbolic} system with  strong damping, with characteristic speeds $\pm 1/\epsilon$, so the wave transports with a large speed when $\epsilon$ is small. To see how this approximates the heat equation for $u(x,t)$, we can see how the limit $\epsilon \ll 1$, Eq.~\eqref{eq:second} is dominated by the term on the right-hand-side, so 
\begin{align} \label{eq:vapprox}
    v \rightarrow -k \epsilon \frac{\partial u}{\partial x}.
\end{align}
Inserting Eq.~\eqref{eq:vapprox} into Eq.~\eqref{eq:first}, we can recover the heat equation for $u(x,t)$
\begin{align} \label{eq:1heat}
    \frac{\partial u}{\partial t} \rightarrow k\frac{\partial^2 u}{\partial x^2}.
\end{align}
Thus the heat equation can emerge from an underlying transport equations with a viscosity or damping term, so long as $\epsilon$ is small enough. \\

We can interpret $v$ as a flux term and this is proportional to the negative gradient of $u$. This law surfaces in a multitude of phenomena. For example, it appears in Fourier's law for heat transport (heat flow proportional to negative gradient in temperature), Ohm's law for charge transport (electric current proportional to the negative gradient of the electric potential), Fick's law for diffusion (diffusion flux proportional to the negative gradient of particle concentration) and Darcy's law for fluid flow in porous media where the flux is proportional to the gradient of pressure. What these laws all have in common is that they hold when the effective viscosity is high, for example in highly viscous fluids.


\begin{lemma} \label{lem:1heaterror}
  Let $u(x,t)$ be the solution of Eqs.~\eqref{eq:first} and~\eqref{eq:second} with given initial conditions $u(0,x)$, and arbitrary initial data  $v(0,x)$ (e.g. one can choose $v(0,x)=0$ without losing generality). Let $\tilde{u}(t,x)$ be the solution of the heat equation $\partial \tilde{u}(t,x)/\partial t=k \partial^2 \tilde{u}(t,x)/\partial x^2$ with initial condition $\tilde{u}(0,x)=u(0,x)$. Then $\|\tilde{u}(t,x)-u(t,x)\| \leq O(e^{-2kt}\epsilon^2)$ for $t>>O(\epsilon^2\ln (1/\epsilon))$.
\end{lemma}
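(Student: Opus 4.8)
The plan is to pass to Fourier variables in $x$, where the Goldstein--Taylor system \eqref{eq:first}--\eqref{eq:second} decouples mode by mode into a constant-coefficient linear ODE that can be solved exactly and compared against the heat mode. Writing $\hat u(t,\xi),\hat v(t,\xi)$ for the transforms of $u,v$, the system becomes $\tfrac{d}{dt}(\hat u,\hat v)^{T}=M(\xi)(\hat u,\hat v)^{T}$ with $M(\xi)=\begin{pmatrix}0 & -i\xi/\epsilon\\ -i\xi/\epsilon & -1/(k\epsilon^{2})\end{pmatrix}$, whose characteristic polynomial is $\lambda^{2}+\lambda/(k\epsilon^{2})+\xi^{2}/\epsilon^{2}=0$, with roots $\lambda_{\pm}(\xi)=\tfrac{1}{2k\epsilon^{2}}\bigl(-1\pm\sqrt{1-4k^{2}\xi^{2}\epsilon^{2}}\bigr)$. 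For $|\xi|\ll 1/(k\epsilon)$ one expands the square root to get a \emph{diffusive} branch $\lambda_{+}(\xi)=-k\xi^{2}-k^{3}\xi^{4}\epsilon^{2}+O(\epsilon^{4}\xi^{6})$, which reproduces the heat dispersion relation up to an $O(\epsilon^{2})$ correction, and a strongly damped \emph{kinetic} branch $\lambda_{-}(\xi)=-1/(k\epsilon^{2})+k\xi^{2}+\cdots$. For $|\xi|\gtrsim 1/(2k\epsilon)$ the two roots are complex with real part $-1/(2k\epsilon^{2})$, hence uniformly damped at rate $\sim 1/\epsilon^{2}$, while on this high-frequency range the heat mode $e^{-k\xi^{2}t}$ is even smaller, so both solutions are negligible there provided $\hat u(0,\cdot)\in L^{2}$ (this is where mild regularity/decay of the initial data enters the implied constant).

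Next I would solve the $2\times 2$ ODE using $e^{tM}=\frac{e^{\lambda_{+}t}(M-\lambda_{-}I)-e^{\lambda_{-}t}(M-\lambda_{+}I)}{\lambda_{+}-\lambda_{-}}$ and read off $\hat u(t,\xi)$; taking $v(0,x)=0$ as permitted gives $\hat u(t,\xi)=\frac{\lambda_{+}e^{\lambda_{-}t}-\lambda_{-}e^{\lambda_{+}t}}{\lambda_{+}-\lambda_{-}}\,\hat u(0,\xi)$, whose coefficient on the $\lambda_{+}$ mode is $1+O(\epsilon^{2}\xi^{2})$ and on the $\lambda_{-}$ mode is $O(\epsilon^{2}\xi^{2})$. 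I then decompose $\hat u(t,\xi)-e^{-k\xi^{2}t}\hat u(0,\xi)$ into three pieces: (a) the $\lambda_{-}$-mode contribution, bounded by $C\epsilon^{2}\xi^{2}e^{-t/(2k\epsilon^{2})}|\hat u(0,\xi)|$, which is $\ll\epsilon^{2}$ once $t\gg\epsilon^{2}\ln(1/\epsilon)$ --- this is exactly where the initial-layer threshold $t\gg O(\epsilon^{2}\ln(1/\epsilon))$ comes from; (b) the coefficient mismatch $O(\epsilon^{2}\xi^{2})e^{-k\xi^{2}t}|\hat u(0,\xi)|$; and (c) the dispersion correction $|e^{\lambda_{+}t}-e^{-k\xi^{2}t}|=e^{-k\xi^{2}t}|e^{(\lambda_{+}+k\xi^{2})t}-1|\le k^{3}\xi^{4}\epsilon^{2}t\,e^{-k\xi^{2}t}$. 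Squaring, summing over $\xi$ and using Plancherel yields $\|u(t,\cdot)-\tilde u(t,\cdot)\|^{2}\le C\epsilon^{4}\int(\xi^{4}+\xi^{8}t^{2})e^{-2k\xi^{2}t}|\hat u(0,\xi)|^{2}\,d\xi$; bounding the frequency weight and using that $\xi^{2}e^{-2k\xi^{2}t}$ and its analogues decay exponentially in $t$ on the spectrum of the data then gives $\|u-\tilde u\|\le O(e^{-2kt}\epsilon^{2})$ in the stated time range.

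The main obstacle is making all of this uniform in the frequency $\xi$: the convenient expansions of $\lambda_{+}$ and of the coefficients break down precisely where $k\xi\epsilon\sim 1$ and the branches become complex, so that regime must be controlled separately through the $-1/(2k\epsilon^{2})$ real part together with the rapid decay of $e^{-k\xi^{2}t}$, and this is where some control on $u(0,x)$ is unavoidable. A secondary subtlety is that the eigenvectors of $M(\xi)$ are non-orthogonal and become nearly parallel as $\epsilon\to 0$, so for \emph{arbitrary} $v(0,x)$ the projection of $(\hat u(0),\hat v(0))$ onto them must use correctly normalised eigenvectors to avoid spurious $1/\epsilon$ factors; choosing $v(0,x)=0$ removes this issue cleanly and is the route I would follow first, afterwards sketching the general-$v(0)$ case by the same splitting. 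An alternative that avoids Fourier analysis altogether is a direct hypocoercive energy estimate on the error in the spirit of the diffusive-limit analysis in \cite{Lions}, but the Fourier computation is the most transparent way to see both the $\epsilon^{2}$ accuracy and the $\epsilon^{2}\ln(1/\epsilon)$ initial layer simultaneously.
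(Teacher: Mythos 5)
Your argument is correct in substance, but it follows a genuinely different route from the paper. The paper's proof is an asymptotic one: it invokes the classical relaxation-limit literature \cite{hsiao1979, Yong2001} for the $O(\epsilon^2)$ expansion, isolates the initial-layer corrector $\mathcal{I}$ via the stretched variable $\tau=t/\epsilon^2$ to justify arbitrary $v(0,x)$ and the threshold $t\gg O(\epsilon^2\ln(1/\epsilon))$, and then uses a Chapman--Enskog expansion to identify the leading error as $\epsilon^2 k^3\partial_x^3 u$, whose norm decays like $e^{-2kt}$ by standard heat-equation estimates \cite{Evans}. You instead diagonalize the constant-coefficient system \eqref{eq:first}--\eqref{eq:second} exactly in Fourier space and compare mode by mode with $e^{-k\xi^2 t}$; your dispersion relation, the expansion $\lambda_+=-k\xi^2-k^3\xi^4\epsilon^2+\cdots$ (which recovers exactly the paper's $\epsilon^2 k^3\partial_x^3 u$ correction to the flux), the $(1,1)$ entry of $e^{tM}$, and the three-way error split are all right. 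Your approach buys an explicit, self-contained proof with no appeal to external asymptotic theorems and makes the origin of both the $\epsilon^2$ accuracy and the $\epsilon^2\ln(1/\epsilon)$ layer completely transparent; the paper's approach is formal at the same points but generalizes directly to the variable-coefficient, cross-derivative and multidimensional systems used later in the text, where Fourier diagonalization is unavailable. Two caveats on your write-up: first, the eigenvectors of $M(\xi)$ become parallel as $4k^2\xi^2\epsilon^2\to 1$ (the branch point where $M$ is defective), not as $\epsilon\to 0$ with $\xi$ fixed --- in the latter limit they tend to the orthogonal pair $(1,0),(0,1)$ --- so the regime needing care is exactly the one you already flag, $k|\xi|\epsilon\sim 1$; second, extracting the prefactor $e^{-2kt}$ from $\xi^4 e^{-2k\xi^2 t}$ requires the data's spectrum to be bounded away from $\xi=0$, which is the same implicit "domain of unit diameter" assumption the paper makes, so you should state it rather than leave it inside "on the spectrum of the data".
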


\begin{proof}
   The asymptotic error between the solution of \eqref{eq:first}-\eqref{eq:second} and of the original heat equation \eqref{heat-1d} is well-established and known from classical methods, see for example \cite{hsiao1979, Yong2001}. The solution of \eqref{eq:first} can be expanded as
\begin{equation} \label{eq:1heatexpansion}
u(t,x)=\tilde{u}(t,x)+\mathcal{I}(t/\epsilon^2,x)+O(\epsilon^2)
\end{equation}
where $\mathcal{I}(t/\epsilon^2, x)\sim \epsilon k \exp{-t/(\epsilon^2k)}$ is the initial layer corrector, whose presence is due to the non-equilibrium initial data $v(0,x)\not= -k\epsilon \frac{\partial u}{\partial x}$. To analyse $\mathcal{I}$, use the stretching variable $\tau=t/\epsilon^2$, and let $v(t,x) = \mathcal{I}_v(\tau, x)$, then  \eqref{eq:second} gives
\begin{equation}
  \frac{\partial \mathcal{I}_v}{\partial \tau}=
  -\frac{1}{k} \mathcal{I}_v + O(\epsilon).
\end{equation}
  By ignoring the $O(\epsilon)$ term, One has an ODE with solution $\mathcal{I}_v=v(0,x)e^{-\tau/k}$. Letting $u(t,x)=\mathcal{I}(\tau, x)$. Then \eqref{eq:first} gives
\begin{equation}\label{I-equation}
\frac{\partial \mathcal{I}}{\partial \tau}
=-\epsilon \frac{\partial \mathcal{I}_v}{\partial x }
= -\epsilon e^{-\tau/k}\frac{\partial v(0,x)}{\partial x}.
\end{equation}
Thus 
\[
\mathcal{I} \sim \epsilon k e^{-\tau/k} = \epsilon k e^{-t/(\epsilon^2k)},
\]   
namely $\mathcal{I}$ decays exponentially to zero as a function of $t/(\epsilon^2k)$.  Thus for $t>>O(\epsilon^2\ln (1/\epsilon))$, $\mathcal{I}=O(\epsilon^2)$ which can be absorbed into the error term of $O(\epsilon^2)$. 

Furthermore, the prefactor in the error of $O(\epsilon^2)$ is proportational to $\partial_{xx}u$. To see this one can use the classical  Chapman-Enskog expansion on \eqref{eq:first}-\eqref{eq:second} \cite{CLL, JinXin}:
\begin{equation}
v=-\epsilon k\frac{\partial u}{\partial x}- \epsilon^2 k\frac{\partial v}{\partial t}
=-\epsilon k\frac{\partial u}{\partial x}+\epsilon^3 k^2\frac{\partial^2 u}{\partial x \partial t} +O(\epsilon^4)
=-\epsilon k\frac{\partial u}{\partial x}-\epsilon^2 k^2 \frac{\partial^2 v}{\partial x^2}+O(\epsilon^4)=-\epsilon k\frac{\partial u}{\partial x}+\epsilon^3 k^3 \frac{\partial^3 u}{\partial x^3}+O(\epsilon^4),
\end{equation}
so the leading error for approximating $v=-\epsilon \frac{\partial u}{\partial x}$ is proportional to $k^3 \frac{\partial^3 u}{\partial x^3}$ where $u$ solves the heat equation $\frac{\partial u}{\partial t}=k\frac{\partial^3 u}{\partial x^3}$. Standard PDE analysis shows that, in a domain of unit diameter, 
$\|\frac{\partial^3 u}{\partial x^3}\|\sim e^{-2kt}$ \cite{Evans}. 
    
\end{proof}

The above lemma reveals that the approximation  \eqref{eq:vapprox} is valid for {\it any initial data} $v(0,x)$, beyond an initial layer of duration $O(\epsilon^2 \log (1/\epsilon))$. 
Thus we do not require the precise preparation for the initial state of the flux $v(0,x)$. In fact, even when we begin with $v(0,x)=0$, beyond the initial layer   $v(t,x)$  rapidly evolves to the condition in Eq.~\eqref{eq:vapprox}.  Then by solving the pair of transport equations Eq.~\eqref{eq:first} and ~\eqref{eq:second} with this simple initial condition, we can approximate the solution of the heat equation in Eq.~\eqref{eq:1heat} to precision $\epsilon^2$. \\

It is now our aim to simulate the system in Eqs.~\eqref{eq:first} and ~\eqref{eq:second} with an analog quantum simulator. We work with the system $w(t,x)=(u(t,x), v(t,x))^T$ and define the following infinite-dimensional vectors 
\begin{align}
    & \vect{u}(t)=\int u(t,x)|x\rangle dx, \qquad \vect{v}(t)=\int v(t,x)|x\rangle dx  \nonumber \\
    & \vect{w}(t)=\begin{pmatrix}
        \vect{u}(t) \\
        \vect{v}(t)
    \end{pmatrix}=|0\rangle \otimes \vect{u}(t)+|1\rangle \otimes \vect{v}(t)
\end{align}
and their corresponding quantum states
\begin{align} \label{eq:1heatquantum}
    & |u(t)\rangle=\frac{\vect{u}(t)}{\|\vect{u}(t)\|}, \qquad |v(t)\rangle=\frac{\vect{v}(t)}{\|\vect{v}(t)\|}, \qquad |w(t)\rangle=\frac{\vect{w}(t)}{\|\vect{w}(t)\|}=|0\rangle \otimes |u(t)\rangle \frac{\|\vect{u}(t)\|}{\|\vect{w}(t)\|}+|1\rangle \otimes |v(t)\rangle \frac{\|\vect{v}(t)\|}{\|\vect{w}(t)\|}, \nonumber \\
    & \|\vect{w}(t)\|^2=\|\vect{u}(t)\|^2+\|\vect{v}(t)\|^2.
\end{align}
Our first step is to simulate $|w(t)\rangle$. From Lemma~\ref{lem:1heaterror}, without losing generality, we can begin with the initial conditions $u(0,x)$, $v(0,x)=0$, which requires the initial state preparation
\begin{align}
    |w(0)\rangle=|0\rangle \otimes |u(t)\rangle. 
\end{align}
Using the rule $\partial/\partial x \rightarrow i \hat{p}_x$, then we can rewrite Eqs.~\eqref{eq:first} and ~\eqref{eq:second} as
\begin{align}
     \frac{d\vect{w}(t)}{d t}=\left(-i\frac{1}{\epsilon}\sigma_x \otimes \hat{p}_x-\frac{1}{k \epsilon^2}|1\rangle \langle 1|\otimes \mathbf{1}_x\right) \vect{w}(t)=-i\vect{A}\vect{w}(t)
\end{align}
which has the same form as Eq.~\eqref{eq:wequationfull} with 
\begin{align}
   &  \vect{A}=\vect{A}_1-i\vect{A}_2=\frac{1}{\epsilon} \sigma_x \otimes \hat{p}_x-\frac{i}{k \epsilon^2}|1\rangle \langle 1|\otimes \mathbf{1}_x, \nonumber \\
   & \vect{A}_1=\frac{1}{\epsilon} \sigma_x \otimes \hat{p}_x=\vect{A}_1^{\dagger}, \qquad \vect{A}_2=\frac{1}{k \epsilon^2}|1\rangle \langle 1| \otimes \mathbf{1}_x=\vect{A}_2^{\dagger}.
\end{align}
Since $\vect{A}$ also consists of the anti-Hermitian term $i\vect{A}_2$, we require Schr\"odingerisation. This means we include an ancilla qumode with the initial state $|\Xi\rangle$ and act on the total initial state $|w(0)\rangle \otimes |\Xi\rangle$ with the unitary generated by the Hamiltonian 
\begin{align} \label{eq:1heatham}
   &  \vect{H}=\vect{A}_2 \otimes \hat{\eta}+\vect{A}_1 \otimes \mathbf{1}_{\eta}=\frac{1}{k \epsilon^2}|1\rangle \langle 1| \otimes \mathbf{1}_x \otimes \hat{\eta}+\frac{1}{\epsilon} \sigma_x \otimes \hat{p}_x \otimes \mathbf{1}_{\eta} \nonumber \\
   &=\frac{1}{2k\epsilon^2} \mathbf{1}_2 \otimes \mathbf{1}_x \otimes \hat{\eta}-\frac{1}{2k\epsilon^2} \sigma_z \otimes \mathbf{1}_x \otimes \hat{\eta}+\frac{1}{\epsilon} \sigma_x \otimes \hat{p}_x \otimes \mathbf{1}_{\eta}. 
\end{align}
In this case, the system consists of two qumodes and one qubit degree of freedom. Note that $\hat{p}_x$ and $\hat{\eta}$ act on separate qumodes, so there are only two-party interactions between one qumode and the qubit in each interaction term. The interaction terms are of the form of the electric dipole moment interaction $\sigma_x \otimes \hat{p}$ and magnetic dipole moment interaction $\sigma_z \otimes \hat{\eta}$, which can appear for instance in a Jaynes-Cummings-like model. In the case where the interaction strengths in Eq.~\eqref{eq:1heatham} are too large, we can always rescale the Hamiltonian, and place the factors of $1/\epsilon^2$ into time, where the time of the simulation can for instance be rescaled $t \rightarrow t/\epsilon^2$. Alternatively, since we know from Lemma~\ref{lem:1heaterror} that $t \gg O(\epsilon^2 \ln(1/\epsilon))$, then in the unitary evolution $\exp(-i\vect{H}t)=\exp(-i \vect{H}_{eff})$, the effective Hamiltonian $\vect{H}_{eff}$ does not require strongly interacting terms, and the interacting strength is either $O(1)$ or $O(\epsilon)$. \\

After the evolution $\exp(-i\vect{H}t)(|w(0)\rangle \otimes |\Xi\rangle)$, one can recover $|w(t)\rangle$ by measuring the ancilla mode and postselecting on $\eta>0$ (see Section~\ref{sec:background}), which can be done for example using homodyne measurements, and has success probability $(\|\vect{u}(t)\|^2+\|\vect{v}(t)\|^2)/\|\vect{u}(0)\|^2$. We can keep $|w(t)\rangle$ as the final output of the simulator, since it contains all the physical information not only about $u(t,x)$ but also its flux $v(t,x)$. \\

If we require $|u(t)\rangle$, then one only needs to measure $|w(t)\rangle$ in the qubit degree of freedom to project onto $|0\rangle \langle 0|$, which occurs with success probability $\|\vect{u}(t)\|^2/(\|\vect{u}(t)\|^2+\|\vect{v}(t)\|^2)$, as easily seen from Eq.~\eqref{eq:1heatquantum}. Thus the total success probability from meausuring the ancilla qumode for Schr\"odingerisation and retrieving $|u(t)\rangle$ from $|w(t)\rangle$ is $O(\|\vect{u}(t)\|^2/\|\vect{u}(0)\|^2)$. This scaling is equivalent to schemes in simulating the heat equation directly, even without using the system of hyperbolic heat equation approach.\\

We can compare the quantum state output $|u(t)\rangle$ from this protocol and protocols that directly prepare $|\tilde{u}(t)\rangle=\tilde{\vect{u}}(t)/\|\tilde{\vect{u}}(t)\|$ where $\tilde{\vect{u}}(t)=\int \tilde{u}(t,x)|x\rangle dx$ and $\tilde{u}(t,x)$ is the solution of the heat equation $\partial \tilde{u}(t,x)/\partial t=k \partial^2 \tilde{u}(t,x)/\partial x^2$ with initial condition $\tilde{u}(0,x)=u(0,x)$. These states are also $\epsilon^2$-close to each other. 
\begin{theorem} \label{thm:1heat}
    Given $\tilde{u}(0,x)=u(0,x)$, $\||\tilde{u}(t)\rangle-|u(t)\rangle\|\leq O(\epsilon^2)$. Thus for some fixed $\delta>0$ where $\||\tilde{u}(t)\rangle-|u(t)\rangle| \leq \delta$, it is sufficient to choose $\delta=O(\sqrt{\epsilon})$. 
\end{theorem}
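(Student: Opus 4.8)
The plan is to derive the state-distance estimate from the function-space estimate already furnished by Lemma~\ref{lem:1heaterror}, and then to account for the distortion introduced by normalisation. Set $\vect{u}(t)=\int u(t,x)|x\rangle dx$ and $\tilde{\vect{u}}(t)=\int \tilde{u}(t,x)|x\rangle dx$, so that $|u(t)\rangle=\vect{u}(t)/\|\vect{u}(t)\|$ and $|\tilde{u}(t)\rangle=\tilde{\vect{u}}(t)/\|\tilde{\vect{u}}(t)\|$; the quantity $\|\tilde{\vect{u}}(t)-\vect{u}(t)\|$ is exactly the $L^2$ difference $\|\tilde u(t,\cdot)-u(t,\cdot)\|$ controlled in Lemma~\ref{lem:1heaterror} by $O(e^{-2kt}\epsilon^2)$, which is $\le O(\epsilon^2)$ uniformly in $t\ge 0$ (since $k>0$ gives $e^{-2kt}\le 1$) and for all times beyond the $O(\epsilon^2\ln(1/\epsilon))$ initial layer, which is the regime of interest.

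Next I would invoke the elementary normalisation bound: for any nonzero $a,b$ in a Hilbert space, $\bigl\|a/\|a\| - b/\|b\|\bigr\| \le 2\|a-b\|/\|a\|$, which one proves by writing $a/\|a\|-b/\|b\| = (a-b)/\|a\| + b\bigl(1/\|a\|-1/\|b\|\bigr)$ and using the reverse triangle inequality $|\,\|a\|-\|b\|\,|\le\|a-b\|$. Taking $a=\tilde{\vect{u}}(t)$ and $b=\vect{u}(t)$ gives
\[
\bigl\||\tilde{u}(t)\rangle-|u(t)\rangle\bigr\| \;\le\; \frac{2\,\|\tilde{\vect{u}}(t)-\vect{u}(t)\|}{\|\tilde{\vect{u}}(t)\|} \;\le\; \frac{O(e^{-2kt}\epsilon^2)}{\|\tilde{\vect{u}}(t)\|}.
\]
It then remains to bound $\|\tilde{\vect{u}}(t)\|$ from below. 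I would argue that since $\tilde u$ solves the heat equation with fixed nonzero data $\tilde u(0,\cdot)=u(0,\cdot)$ and the heat semigroup is injective, $\|\tilde{\vect{u}}(t)\|>0$ for every finite $t$, hence $\|\tilde{\vect{u}}(t)\|\ge c_T>0$ on any compact interval $[0,T]$ by continuity; on a bounded spatial domain with a mass-conserving (Neumann or periodic) boundary condition -- the ``unit diameter'' setting already used in the proof of Lemma~\ref{lem:1heaterror} -- one even gets a $t$-independent lower bound when $u(0,\cdot)$ has nonzero mean. Absorbing this constant into the $O(\cdot)$ then yields $\bigl\||\tilde{u}(t)\rangle-|u(t)\rangle\bigr\| \le O(\epsilon^2)$.

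The final clause is then just a reparametrisation: the bound $\bigl\||\tilde{u}(t)\rangle-|u(t)\rangle\bigr\|\le C\epsilon^2$ means that any target accuracy $\delta$ with $\delta\ge C\epsilon^2$ is met -- in particular $\delta=O(\sqrt{\epsilon})$ -- or, read the other way, to reach a prescribed $\delta$ it suffices to take $\epsilon=O(\sqrt{\delta})$. The step I expect to be the real obstacle, or at least the one requiring a careful hypothesis, is the lower bound on $\|\tilde{\vect{u}}(t)\|$: on the whole line the heat flow drives $\|\tilde u(t,\cdot)\|_{L^2}\to 0$ as $t\to\infty$, so a clean $O(\epsilon^2)$ with no hidden time dependence genuinely needs either a finite time horizon or a bounded domain carrying a conserved zero-frequency mode, in which case the $e^{-2kt}$ factor sitting in the numerator is exactly what offsets any residual exponential decay of the denominator. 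Everything else is bookkeeping.
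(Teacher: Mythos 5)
Your proposal is correct and follows essentially the same route as the paper: the identical normalisation inequality $\bigl\|a/\|a\|-b/\|b\|\bigr\|\le 2\|a-b\|/\|a\|$, the numerator controlled by Lemma~\ref{lem:1heaterror}, and the denominator bounded below via the standard heat-equation estimate $\|\tilde{\vect{u}}(t)\|\gtrsim e^{-2kt}$ on a unit-diameter domain, so that the $e^{-2kt}$ factors cancel --- precisely the mechanism you flag in your closing remark. Your explicit discussion of why a lower bound on $\|\tilde{\vect{u}}(t)\|$ is the delicate point (and fails on the whole line without the matching $e^{-2kt}$ in the numerator) is in fact more careful than the paper's one-line appeal to \cite{Evans}.
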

\begin{proof}
\begin{eqnarray}
\||\tilde{u}(t)\rangle-|u(t)\rangle\| &=&\left\| \frac{\tilde{u}(t)}{\|\tilde{u}(t)\|}
- \frac{u(t)}{\|u(t)\|} \right\|
=\frac{\left\|\tilde u(t) \|u(t)\|-u(t)\|\tilde{u}(t)\|\right\|}{\|\tilde{u}(t)\| \|u(t)\|}\\
&=&\frac{\|(\tilde{u}(t)-u(t))\|u(t)\|+u(t)(\|u(t)\|-\|\tilde {u}(t)\|)\|}{\|\tilde{u}(t)\| \|u(t)\|}\\
&\le &
 \frac{\|(\tilde {u}(t)-u(t)\| }{\|\tilde{u}(t)\|}+\frac{|\|\tilde {u}(t)\|-\|u(t)\||}{ \|\tilde{u}(t)\|}  \le 2 \frac{\|\tilde {u}(t)-u(t)\|}{ \|\tilde{u}(t)\|} 
 \lesssim O\left( \frac{e^{-2kt}\epsilon^2}{e^{-2kt}}\right)=O(\epsilon^2),
\end{eqnarray}
where in the last inequality the denominator is from standard estimate on the
solution for the heat equation \eqref{heat-1d} (in a  domain of unit diameter, for example) \cite{Evans}, while the upper bound in the numerator is from Lemma \ref{lem:1heaterror}.
\end{proof}

\subsection{Multidimensional heat equation} \label{sec:dheat}

We now approximate the solution to the $d$-dimensional heat equation 
\begin{align}
    \frac{\partial \tilde{u}}{\partial t}=\sum_{j=1}^d k_j \frac{\partial^2 \tilde{u}}{\partial x_j}, \qquad \tilde{u}(0,x)=u(0,x), \qquad x=(x_1, \cdots, x_d),
\end{align}
using a multidimensional extension of the hyperbolic heat equation.  Here we allow $d$ distinct parameters $\epsilon_j$, $j=1, \cdots, d$, which can be chosen to take different values to enable greater flexibility in the physical realisation of the corresponding analog quantum simulation. The corresponding system of $d+1$ hyperbolic equations is:
\begin{align} 
& \frac{\partial u}{\partial t}=-\sum_{j=1}^d \frac{1}{\epsilon_j}\frac{\partial v_j}{\partial x_j}  \label{eq:dheatsystemfirst} \\ 
& \frac{\partial v_j}{\partial t}=-\frac{1}{\epsilon_j}\frac{\partial u}{\partial x_j}-\frac{1}{k_j\epsilon_j^2}v_j, \label{eq:dheatsystemsecond} \qquad j=1, \cdots, d. 
\end{align}
Here it is clear that in the limit $\epsilon_j \ll 1$, the following approximation holds
\begin{align}
    v_j \rightarrow -k_j \epsilon_j \frac{\partial u}{\partial x_j}, \qquad j=1, \cdots, d.
\end{align}
Inserting this into Eq.~\eqref{eq:dheatsystemfirst} gives
\begin{align}
    \frac{\partial u}{\partial t} \rightarrow \sum_{j=1}^d k_j \frac{\partial^2 u}{\partial x_j^2}.
\end{align}
\begin{theorem} \label{them:dheat}
For $d$-dimensional problems, using the initial conditions $u(0,x)$ and $v_j(0,x)=0$, the error in Lemma \ref{lem:1heaterror} becomes 
\begin{equation}
\|\tilde{u}(t,x)-u(t,x)\| \leq O(e^{-2kt}d\epsilon^2),
\end{equation}
when $t \gg \epsilon^2 \ln(1/\epsilon)$ and we denote $\epsilon = \max_j \epsilon_j$. 
Similarly in $d$-dimensions, $\| |\tilde{u}(t)\rangle-|u(t)\rangle \| \leq O(d \epsilon^2)$. Thus for some fixed $\delta>0$ where $\| |\tilde{u}(t)\rangle-|u(t)\rangle \| \leq \delta$, it is sufficient to choose $\epsilon =O(\sqrt{\delta/d})$. 
\end{theorem}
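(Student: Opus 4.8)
The plan is to reduce everything to the one-dimensional analysis already carried out in Lemma~\ref{lem:1heaterror} and Theorem~\ref{thm:1heat}, exploiting the fact that the coupling in Eqs.~\eqref{eq:dheatsystemfirst}--\eqref{eq:dheatsystemsecond} is essentially ``diagonal'': each flux variable $v_j$ relaxes toward $-k_j\epsilon_j\,\partial u/\partial x_j$ through its own damping term, with no direct coupling to the other fluxes. First I would handle the initial layer. Writing $v_j(t,x)=\mathcal{I}_{v_j}(t/\epsilon_j^2,x)+\dots$ exactly as in the proof of Lemma~\ref{lem:1heaterror}, each corrector solves $\partial_\tau\mathcal{I}_{v_j}=-\mathcal{I}_{v_j}/k_j+O(\epsilon_j)$ and so decays like $e^{-t/(k_j\epsilon_j^2)}$, contributing to $u$ a term of size $\epsilon_j k_j e^{-t/(k_j\epsilon_j^2)}$. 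Hence all $d$ correctors are simultaneously $O(\epsilon^2)$ once $t\gg\max_j\epsilon_j^2\ln(1/\epsilon_j)=O(\epsilon^2\ln(1/\epsilon))$ with $\epsilon=\max_j\epsilon_j$, which is the origin of the stated threshold on $t$.

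Next, beyond the initial layer I would run the Chapman--Enskog expansion of Eqs.~\eqref{eq:dheatsystemfirst}--\eqref{eq:dheatsystemsecond} component by component, as in the proof of Lemma~\ref{lem:1heaterror}, using \cite{CLL, JinXin}. For each $j$ this gives $v_j=-\epsilon_j k_j\,\partial u/\partial x_j+O(\epsilon_j^3)$, where the $O(\epsilon_j^3)$ remainder, after substituting $\partial_t u=\sum_l k_l\partial_{x_l}^2 u$ at leading order, is a linear combination of third/fourth-order spatial derivatives of $u$ with coefficients built from products of the $k_l$. Inserting into Eq.~\eqref{eq:dheatsystemfirst}, the discrepancy between $\partial_t u$ and $\sum_j k_j\partial_{x_j}^2 u$ is $\sum_{j=1}^d(1/\epsilon_j)\,\partial_{x_j}\bigl(O(\epsilon_j^3)\bigr)=\sum_{j=1}^d O(\epsilon_j^2)$, a sum of $d$ terms each bounded by $O(\epsilon^2)$, hence $O(d\epsilon^2)$. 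A Duhamel/energy estimate for the heat semigroup then propagates this source-term error to $\|\tilde u(t,x)-u(t,x)\|$; because the relevant high-order spatial derivatives of the heat solution $u$ decay (in a domain of unit diameter) like $e^{-2kt}$ by standard parabolic estimates~\cite{Evans}, the bound sharpens to $O(e^{-2kt}d\epsilon^2)$, proving the first claim. (Here $k$ is to be read as the decay rate set by the slowest diffusive mode, e.g. governed by $\min_j k_j$.)

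The quantum-state estimate then follows verbatim from the chain of inequalities in the proof of Theorem~\ref{thm:1heat}: $\||\tilde u(t)\rangle-|u(t)\rangle\|\le 2\|\tilde u(t)-u(t)\|/\|\tilde u(t)\|$, and since $\|\tilde u(t)\|\gtrsim e^{-2kt}$ by the same parabolic estimate, the $e^{-2kt}$ factor cancels and one obtains $\||\tilde u(t)\rangle-|u(t)\rangle\|\le O(d\epsilon^2)$. Requiring this to be at most $\delta$ forces $d\epsilon^2=O(\delta)$, i.e.\ $\epsilon=O(\sqrt{\delta/d})$, which is the last assertion.

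I expect the main point requiring care — more a bookkeeping nuisance than a genuine obstacle — to be the multidimensional Chapman--Enskog step: unlike in 1D, iterating $v_j=-\epsilon_j k_j\partial_{x_j}u-\epsilon_j^2 k_j\partial_t v_j$ and replacing $\partial_t u$ by $\sum_l k_l\partial_{x_l}^2 u$ generates mixed derivatives such as $\partial_{x_j}^2\partial_{x_l}^2 u$, so one must verify that every such term still enters at order $\epsilon_j^2$ after the $\partial_{x_j}$ in Eq.~\eqref{eq:dheatsystemfirst} and that its time decay is still of $e^{-2kt}$ type (a product of exponentially decaying modes). Once this is checked, the summation over $j$ yields the clean factor of $d$ and everything else is a routine repetition of the one-dimensional arguments.
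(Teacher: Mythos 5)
Your proposal is correct and follows essentially the same route as the paper: exploit the fact that the $d$ flux equations relax independently so the errors are additive, note that each initial-layer corrector solves the same ODE and hence decays at a $d$-independent rate (giving the threshold $t\gg\epsilon^2\ln(1/\epsilon)$ with $\epsilon=\max_j\epsilon_j$), and then repeat the normalization chain of Theorem~\ref{thm:1heat} to pass from $\|\tilde u - u\|\le O(e^{-2kt}d\epsilon^2)$ to $\||\tilde u(t)\rangle-|u(t)\rangle\|\le O(d\epsilon^2)$. The only difference is cosmetic: where the paper defers the first inequality to Theorem 4.2 of \cite{Yong2001}, you carry out the multidimensional Chapman--Enskog bookkeeping explicitly (including the mixed-derivative terms), which makes the argument more self-contained but does not change its substance.
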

\begin{proof}
    The first inequality can be easily seen from, for instance, Theorem 4.2 in \cite{Yong2001}. In $d$ dimensions, the expansion in Eq.~\eqref{eq:1heatexpansion} still holds except there are $d$ \textit{additive} terms dependent on $\epsilon_j$ in the expansion. We can write $\epsilon =\max_j \epsilon_j$ to simplify our expressions. Since there are only additive contributions due to $d$, and the rate of decay of $\mathcal{I}$ remains the same and does not depend on $d$ (this is due to the fact that $\mathcal{I}$ solves an ODE \eqref{I-equation}), so choosing $t \gg \epsilon^2 \ln(1/\epsilon)$ is sufficient. The rest of the proof to show $\||\tilde{u}(t)\rangle-|u(t)\rangle\| \leq O(d\epsilon^2)$ runs similarly to Theorem~\ref{thm:1heat}.
\end{proof}
It it is clear from Theorem~\ref{them:dheat} that since the error only scales linearly in $d$, this method does not introduce any curse-of-dimensionality to approximate $|\tilde{u}(t)\rangle$, so long as $|u(t)\rangle$ can be prepared efficiently. The latter we show below. The proof is very similar for all the $d$-dimensional parabolic PDEs so we do not repeat it in the following examples.\\

Define the hybrid $d$-qumode and single qudit ($d$ levels) quantum state 
\begin{align}
   & \vect{u}(t)=\int u(t,x)|x\rangle dx, \qquad \vect{v}_j(t)=\int v_j(t,x)|x\rangle dx \nonumber \\
   &  |w(t)\rangle=\frac{\vect{w}(t)}{\|\vect{w}(t)\|}, \qquad \vect{w}(t)=|0\rangle \otimes \vect{u}(t)+\sum_{j=1}^d |j\rangle \otimes \vect{v}_j(t), \qquad \|\vect{w}(t)\|^2=\|\vect{u}(t)\|^2+\sum_{j=1}^d \|\vect{v}_j(t)\|^2.
\end{align}
In principle, the $d$-level qudit can be rewritten as a system of $\log d$ qubits. However, as one will see from the form of our Hamiltonian, analog simulation on the qudit system is much more natural since each interaction term is only between one qudit and one qumode. If it is reduced to a system of qubits, then one would require multi-qubit entanglement, which is more difficult to experimentally realise. \\

Using the rule $\partial/\partial x_j \rightarrow i \hat{p}_j$, one can rewrite Eqs.~\eqref{eq:dheatsystemfirst} and ~\eqref{eq:dheatsystemsecond} as  
\begin{align}
    & \frac{d \vect{w}(t)}{dt}=-i\left((|0\rangle \langle 1|+|1\rangle \langle 0|)\otimes \hat{p}_1/\epsilon_1+(|0\rangle \langle 2|+|2\rangle \langle 0|)\otimes \hat{p}_2/\epsilon_2+\cdots+(|0\rangle \langle d|+|d\rangle \langle 0|)\otimes \hat{p}_{d}/\epsilon_d \right)\vect{w}(t) \nonumber \\
    &-\left(\frac{1}{\epsilon^2_1 k_1}|1\rangle \langle 1|+\cdots \frac{1}{\epsilon^2_d k_d}|d\rangle \langle d|\right)\otimes \mathbf{1}_x \vect{w}(t)=-i\vect{A} \vect{w}(t), \qquad \vect{A}=\vect{A}_1-i\vect{A}_2 \nonumber \\
& \vect{A}_1=\sum_{j=1}^d \frac{1}{\epsilon_j}(|0\rangle \langle j|+|j\rangle \langle 0|)\otimes \hat{p}_j=\vect{A}_1^{\dagger}, \qquad  \vect{A}_2=\sum_{j=1}^d \frac{1}{ \epsilon^2_j k_j}|j\rangle \langle j| \otimes \mathbf{1}_x=\vect{A}_2^{\dagger}.
\end{align}
We apply Schr\"odingerisation as before and augment our system with an ancilla qumode, and begin in the initial state $|w(0)\rangle \otimes |\Xi\rangle$. We can also choose the initial conditions $v_j(0,x)=0$ so long as $t> \epsilon_j^2\log(1/\epsilon_j)$
for all $j$. The proof is very similar to Lemma~\ref{lem:1heaterror}. It is sufficient to choose $t>\max{\epsilon_j^2 \ln(1/\epsilon_j)}$. Thus we can choose the simplest initial condition to prepare $|w(0)\rangle=|0\rangle \otimes |u(0)\rangle$.\\

Using Schr\"odingerisation, the Hamiltonian we want to simulate becomes 
\begin{align} \label{eq:dheatham}
    \vect{H}=\vect{A}_2\otimes \hat{\eta}+\vect{A}_1 \otimes \mathbf{1}_{\eta}=\sum_{j=1}^d\frac{1}{\epsilon^2_j k_j} |j\rangle \langle j| \otimes \mathbf{1}_x \otimes \hat{\eta}+\sum_{j=1}^d \frac{1}{\epsilon_j}(|0\rangle \langle j|+|j\rangle \langle 0|)\otimes \hat{p}_j \otimes \mathbf{1}_{\eta}.
\end{align}
This acts on an initial state $|0\rangle \otimes |u(t)\rangle \otimes |\Xi\rangle$ consisting of $d+1$ qumodes and a qudit with $d$ levels.  However, for each interaction term, there is only pairwise interaction between one qumode and one qudit at a time. Each interaction is again of the type $\sigma \otimes \hat{p}$. This means that we only require pairwise electric dipole moment and magnetic dipole moment-like interactions, between every two level system ${|0\rangle, |j\rangle}$ and a qumode, for every $j$. We are also allowed to tune the relative strength of each interaction with $\epsilon_j$ being different for each $j$. For instance, it is expected that for many physical systems, $1/\epsilon_j$ becomes smaller as $j$ increases. \\

Although the second term in Eq.~\eqref{eq:dheatham} is reminiscent of the Tavis-Cummings model (or the modified Dicke model), they are not equivalent models. For example, here we only want the qumode to induce shifts between the $|0\rangle$ and $|j\rangle$ qudit states for all $j$, and not between any other state pairs. \\

Similar to the one-dimensional heat equation scenario, the success probability in retrieving $|w(t)\rangle$ from $\exp(-i\vect{H}t)(|0\rangle \otimes |u(t)\rangle \otimes |\Xi\rangle)$ is $O(\|\vect{w}(t)\|^2/\|\vect{u}(0)\|^2)$. Then the total success probability to retrieve $|u(t)\rangle$ requires a projective measurement onto the $|0\rangle \langle 0|$ qudit state, so the total success probability including the Schr\"odingerisation step is again $O(\|\vect{u}(t)\|^2/\|\vect{u}(0)\|^2)$. 

\section{The Black-Scholes equation} \label{sec:blackscholes}

\subsection{The one-dimensional Black-Scholes equation}

The Black-Scholes (backward) equation \cite{black1973pricing} is a PDE that evaluates the price of a financial derivative
\begin{align} \label{eq:1blackscholes}
    \frac{\partial \tilde{u}}{\partial t}+\frac{1}{2} \sigma^2 S^2 \frac{\partial^2 \tilde{u}}{\partial S^2}+rS \frac{\partial \tilde{u}}{\partial S}-r\tilde{u}=0, \qquad \sigma, r \in \mathbb{R}, \qquad \tilde{u}(T,S)=\tilde{u}_T(x)=u_T(x)
\end{align}
where $\tilde{u}(t,S)$ is the price of the option as a function of the stock price $S$ and only the terminal condition is given. Here $r$ is the risk-free interest rate, and 
$\sigma$ is the volatility of the stock, which here we take to be constants. This one-dimensional case applies for example to single-asset options (European).\\

If one simulates Eq.~\eqref{eq:1blackscholes} directly with Schr\"odingerisation, then from \cite{2023analogPDE} one sees that this requires the realisation of highly non-Gaussian terms of the kind $\hat{x}^2 \hat{p}^2 \otimes \hat{\eta}$. Instead, we can make a transformation $S \rightarrow e^x$ and reverse time $t \rightarrow T-t$ to obtain a forward parabolic equation
\begin{align} \label{eq:bs2}
    \frac{\partial \tilde{u}}{\partial t}=\left(r-\frac{\sigma^2}{2}\right)\frac{\partial \tilde{u}}{\partial x}+\frac{\sigma^2}{2}\frac{\partial^2 \tilde{u}}{\partial x^2}-r\tilde{u}.
\end{align}
Note that since time is reversed, the terminal condition in Eq.~\eqref{eq:1blackscholes} becomes an initial condition. 
This is essentially the heat equation, now with a convection term, a linear potential term and an initial condition. Thus, following the same methodology as for the heat equation in Section~\ref{sec:heat}, we can then consider the following modified system of two linear PDEs
\begin{align} \label{eq:bssystem}
    &  \frac{\partial u}{\partial t}=-\frac{1}{\epsilon}\frac{\partial v}{\partial x}+\left(r-\frac{\sigma^2}{2}\right)\frac{\partial u}{\partial x}- r u \nonumber \\
    &  \frac{\partial v}{\partial t}=-\frac{1}{\epsilon}\frac{\partial u}{\partial x}-\frac{2}{\sigma^2 \epsilon^2} v, \qquad \epsilon \ll 1,
\end{align}
which reproduces Eq.~\eqref{eq:bs2} in the small $\epsilon$ limit. Eq.~\eqref{eq:bssystem} is a hyperbolic system, whose Jacobian matrix has two real eigenvalues $\frac{1}{2}[-(r-\sigma^2/2)\pm \sqrt{(r-\sigma^2/2)^2+4/\epsilon^2}]$,  thus is stable. Similar to the heat equation scenario, the system Eq.~\eqref{eq:bssystem} can be rewritten as 
\begin{align}
&  \frac{d \vect{w}(t)}{dt}=i\begin{pmatrix}
     r-\sigma^2/2 & -1/\epsilon \\
     -1/\epsilon & 0
 \end{pmatrix}   \otimes \hat{p}_x \vect{w}(t)+\begin{pmatrix}
     -r & 0 \\
     0 & -2/(\sigma^2 \epsilon^2)
 \end{pmatrix} \otimes \mathbf{1}_x \vect{w}(t)=-i\vect{A}\vect{w}(t), \qquad \vect{A}=\vect{A}_1-i\vect{A}_2,\nonumber \\
 &\vect{A}_1=\frac{1}{\epsilon} \sigma_x \otimes \hat{p}_x+\left(\frac{\sigma^2}{2}-r\right)|0\rangle \langle 0| \otimes \hat{p}_x=\frac{1}{\epsilon} \sigma_x \otimes \hat{p}_x+\frac{1}{2}\left(\frac{\sigma^2}{2}-r\right)(\mathbf{1}_2+\sigma_z) \otimes \hat{p}_x=\vect{A}^{\dagger}_1, \nonumber \\
 & \qquad \vect{A}_2=r|0\rangle \langle 0| \otimes \mathbf{1}_x+\frac{2}{\sigma^2 \epsilon^2}|1\rangle \langle 1| \otimes \mathbf{1}_x= \left(\frac{r}{2}+\frac{1}{\sigma^2 \epsilon^2}\right)\mathbf{1}_2\otimes \mathbf{1}_x+\left(\frac{r}{2}-\frac{1}{\sigma^2 \epsilon^2}\right)\sigma_z \otimes \mathbf{1}_x=\vect{A}_2^{\dagger}.
\end{align}
This means that the Hamiltonian we require is of the form
\begin{align}
    & \vect{H}= \vect{A}_2 \otimes \hat{\eta}+\vect{A}_1 \otimes \mathbf{1}_{\eta} \nonumber \\
    &=\left(\frac{r}{2}+\frac{1}{\sigma^2 \epsilon^2}\right)\mathbf{1}_2\otimes \mathbf{1}_x \otimes \hat{\eta}+\left(\frac{r}{2}-\frac{1}{\sigma^2 \epsilon^2}\right)\sigma_z \otimes \mathbf{1}_x \otimes \hat{\eta}+\frac{1}{\epsilon} \sigma_x \otimes \hat{p}_x\otimes \mathbf{1}_{\eta}+\frac{1}{2}\left(\frac{\sigma^2}{2}-r\right)(\mathbf{1}_2+\sigma_z) \otimes \hat{p}_x \otimes \mathbf{1}_{\eta}.
\end{align}
Here we see that the most difficult terms are still either the electric or magnetic dipole moment-like interactions in a Jaynes-Cummings model, of the kind $\sigma \otimes \hat{p}$. This is still a system of two qumodes and one qubit, where each interaction term is only between one qubit and one qumode.  \\

There is also a forward version of the Black-Scholes equation, but since the results are similar, we leave aside the details. 

\subsection{Multidimensional Black-Scholes equation} \label{sec:dbs}

The $d$-dimensional Black–Scholes equation is appropriate when one has multiple ($d$ ) underlying assets, instead of only one. It has the form
\begin{align}
    \frac{\partial \tilde{u}}{\partial t}+\frac{1}{2}\sum_{j=1}^d \sigma_j^2 S_j^2 \frac{\partial^2 \tilde{u}}{\partial S_j^2}+\sum_{j=1}^{d-1} \sum_{k=j+1}^d \kappa_{j,k}\sigma_j \sigma_k S_j S_k \frac{\partial^2 \tilde{u}}{\partial S_j \partial S_k}+\sum_{j=1}^d \mu_j S_j \frac{\partial \tilde{u}}{\partial S_j}-r\tilde{u}=0, \qquad \tilde{u}(T,S)=\tilde{u}_T(S)=u_T(S),
\end{align}
where we take $\sigma_j, \kappa_{j,k}, \mu_j$ to be constants. It can be shown that the multidmensional Black-Scholes equation can be reduced to a multidimensional heat equation. Using a similar transformation to the one-dimensional case $S_j(t) \rightarrow S_j(0)e^{x_j}$ and reversing time, one can instead solve for $\tilde{u}(t,x)$ obeying the $d$-dimensional heat equation with an initial condition (see Eq.(31) in \cite{guillaume2019multidimensional} for more details)
\begin{align} \label{eq:dbs}
    \frac{\partial \tilde{u}}{\partial t}-\frac{1}{2}\sum_{j=1}^d \sigma^2_j \frac{\partial^2 \tilde{u}}{\partial x_j^2}-\sum_{j=1}^{d-1}  \kappa_{j,j+1} \sigma_j \sigma_{j+1} \frac{\partial^2 \tilde{u}}{\partial x_j \partial x_{j+1}}-\sum_{j=1}^d \left(\mu_j-\frac{\sigma_j^2}{2}\right)\frac{\partial \tilde{u}}{\partial x_j}+r\tilde{u}=0, \quad x \in \mathbb{R}^d, \quad \tilde{u}(0,x)=\tilde{u}_0(x)=u_0(x).
\end{align}
Here the multidimensional equation features mixed derivatives (i.e. with $\kappa_{j,j+1} \neq 0$), which causes known difficulties in classical numerical schemes, for example \cite{craig1988alternating}. To tackle this, another transformation is possible $(x_1, \cdots, x_d) \rightarrow (z_1, \cdots, z_d)$ (see Proposition 2 in \cite{guillaume2019multidimensional} for details on the transformation) to reduce Eq.~\eqref{eq:dbs} to the standard multidimensional heat equation
\begin{align}
    \frac{\partial \tilde{u}}{\partial t}=\frac{1}{2}\sum_{j=1}^d \frac{\partial^2 \tilde{u}}{\partial z^2_j},
\end{align}
which can be tackled with the methods described in Section~\ref{sec:dheat}. However, this transformation is very involved, and we ask if there is an alternative method that allows us to approximate the solution to  Eq.~\eqref{eq:dbs} directly. We note that while the basic methodology from Section~\eqref{sec:heat} can be applied, there are many possible systems of one-dimensional PDEs that can approximate Eq.~\eqref{eq:dbs}. The aim is to choose the system of PDEs where we can most simply simulate this with near-term quantum systems. For simplicity we can rescale the coordinates $x_j \rightarrow x_j/\sigma_j$, so Eq.~\eqref{eq:dbs} becomes 
\begin{align} \label{eq:dbs2}
    \frac{\partial \tilde{u}}{\partial t}-\frac{1}{2}\sum_{j=1}^d \frac{\partial^2 \tilde{u}}{\partial x_j^2}-\sum_{j=1}^{d-1} \kappa_{j, j+1}\frac{\partial^2 \tilde{u}}{\partial x_j \partial x_{j+1}}-\sum_{j=1}^d \gamma_j \frac{\partial \tilde{u}}{\partial x_j}+r\tilde{u}=0, \qquad \gamma_j=\mu_j/\sigma_j-\sigma_j/2.
\end{align}
Here we see that it is a simple extension of the multidimensional heat equation with cross-terms and convection terms, which is tackled in Section~\ref{sec:heatcross}. In this case, $D_{jj}=1/2$, $D_{j,j+1}=\kappa_{j, j+1}$ and the convection terms and the linear force term are straightforward to include, without adding more complex terms to the Hamiltonian in Eq.~\eqref{eq:dheatcrossham}. Here, no matter how large $d$ is, each interaction term in the Hamiltonian involves an electric and magnetic dipole moment-like interaction between a single qudit and a qumode.\\

The multidimensional Black-Scholes equation is a just special case of the more general Eq.~\eqref{eq:generaldbs} in Appendix~\ref{sec:heatcross} when $D_{jj}=1/2$, $D_{j,j+1}=\kappa_{j,j+1}$ and is zero everywhere else. \\

As a simple illustration suppose we choose the following $d=2$ example
\begin{align}
    \frac{\partial \tilde{u}}{\partial t}-\frac{1}{2}\frac{\partial^2 \tilde{u}}{\partial x_1^2}-\frac{1}{2}\frac{\partial^2 \tilde{u}}{\partial x_2^2}-\frac{\partial^2 \tilde{u}}{\partial x_1 \partial x_2}-\gamma_1\frac{\partial \tilde{u}}{\partial x_1}-\gamma_2\frac{\partial \tilde{u}}{\partial x_2}+ru=0
\end{align}
and the following system (among many others) can be used to approximate $\tilde{u}$
\begin{align}
    &\frac{\partial u}{\partial t}=-\frac{1}{2\epsilon}\left(\frac{\partial v_1}{\partial x_1}+\frac{\partial v_2}{\partial x_1}+\frac{\partial v_1}{\partial x_2}+\frac{\partial v_2}{\partial x_2}\right)-\gamma_1\epsilon^2v_1-\gamma_2 \epsilon^2 v_2-ru \nonumber \\
    & \frac{\partial v_1}{\partial t}=-\frac{1}{2 \epsilon}\left(\frac{\partial u}{\partial x_1}+\frac{\partial u}{\partial x_2}\right)-\frac{1}{\epsilon^2}v_1 \nonumber \\
    & \frac{\partial v_2}{\partial t}=-\frac{1}{2 \epsilon}\left(\frac{\partial u}{\partial x_1}+\frac{\partial u}{\partial x_2}\right)-\frac{1}{\epsilon^2}v_2.
\end{align}
The corresponding Hamiltonian required in analog quantum simulation is then 
\begin{align}
     \vect{H} &=\frac{1}{2\epsilon}(|0\rangle \langle 1|+|1\rangle \langle 0|+|0\rangle \langle 2|+|2\rangle \langle 0|)\otimes \hat{p}_1\otimes \mathbf{1}_{\eta}+\frac{1}{2\epsilon}(|0\rangle \langle 1|+|1\rangle \langle 0|+|0\rangle \langle 2|+|2\rangle \langle 0|)\otimes \hat{p}_2\otimes \mathbf{1}_{\eta} \nonumber \\
    &-\left(\frac{1}{\epsilon^2}(|1\rangle \langle 1|+|2\rangle \langle 2|)+r|0\rangle \langle 0|+\gamma_1 \epsilon^2 (|0\rangle \langle 1|+|1\rangle \langle 0|)+\gamma_2 \epsilon^2 (|0\rangle \langle 2|+|2\rangle \langle 0|)\right)\otimes \mathbf{1}_x \otimes \hat{\eta} \nonumber \\
    &+\epsilon^2(i/2)(\gamma_1(|0\rangle \langle 1|-|1\rangle \langle 0|)+\gamma_2(|0\rangle \langle 2|+|2\rangle \langle 0|) \otimes \mathbf{1}_x \otimes \mathbf{1}_{\eta}. 
\end{align}

\section{The Fokker-Planck equation} \label{sec:fokkerplanck}

In the Fokker-Planck equation below, $\mu_j$ are the components of the drift vector and $D_{j}$ are the components of the diagonal diffusion matrix. For simplicity, keep $\mu_j$, $D_j$ as constants and we assume isotropy for diffusion 
\begin{align}
\frac{\partial u}{\partial t}+\sum_{j=1}^d \mu_j\frac{\partial}{\partial x_j} u-\sum_{j=1}^d D_j\frac{\partial^2}{\partial x_j^2} u
=0, \qquad \mu_j, D_{j} \in \mathbb{R}.
\end{align}
The solution can be approximated using the following system with $\epsilon_j \ll 1$
\begin{align} \label{eq:dfpsystem}
& \frac{\partial u}{\partial t}=-\sum_{j=1}^d \frac{1}{\epsilon_j} \frac{\partial v_j}{\partial x_j}-\sum_{j=1}^d \mu_j \frac{\partial u}{\partial x_j} \nonumber \\
& \frac{\partial v_j}{\partial t}=-\frac{1}{\epsilon_j}\frac{\partial u}{\partial x_j}-\frac{1}{D_j\epsilon_j^2}v_j, \qquad j=1, \cdots, d. 
\end{align}
It is simple to check that the system is hyperbolic -- hence stable -- since its Jacobian matrices in all directions are symmetric and has real and distinct eigenvalues. Defining $\vect{w}(t)=|0\rangle \otimes \vect{u}(t)+\sum_{j=1}^d |j\rangle \otimes \vect{v}_j(t)$ as before, we can rewrite Eq.~\eqref{eq:dfpsystem} as
\begin{align}
    & \frac{d \vect{w}(t)}{dt}=-i\sum_{j=1}^d \frac{1}{\epsilon_j}(|0\rangle \langle j|+|j\rangle \langle 0|)\otimes \hat{p}_j \vect{w}(t)-i\sum_{j=1}^d \frac{\mu_j}{\epsilon_j} |0\rangle \langle 0|\otimes \hat{p}_j \vect{w}(t)-\sum_{j=1}^d \frac{1}{D_j \epsilon_j^2}|j\rangle \langle j|\otimes \mathbf{1}_x=-i\vect{A} \vect{w}(t), \nonumber \\
    & \vect{H}=-i\vect{A}_1-\vect{A}_2, \qquad 
     \vect{A}_1=\sum_{j=1}^d \frac{1}{\epsilon_j}(|0\rangle \langle j|+|j\rangle \langle 0|+\mu_j|0\rangle \langle 0|)\otimes \hat{p}_x, \qquad \vect{A}_2=\sum_{j=1}^d \sum_{j=1}^d \frac{1}{D_j \epsilon_j^2}|j\rangle \langle j|\otimes \mathbf{1}_x.
\end{align}
Thus the Hamiltonian has the form
\begin{align}
    \vect{H}=\sum_{j=1}^d\sum_{j=1}^d \frac{1}{D_j \epsilon_j^2}|j\rangle \langle j|\otimes \mathbf{1}_x \otimes \hat{\eta}+\sum_{j=1}^d \frac{1}{\epsilon_j}(|0\rangle \langle j|+|j\rangle \langle 0|+\mu_j|0\rangle \langle 0|)\otimes \hat{p}_x \otimes \mathbf{1}_{\eta}.
\end{align}
This is a system of $d+1$ qumodes and one $d$-level qudit, and one can see that each interaction term is between one qudit and one qumode. \\

Generalisation to the anisotropic case requires the presence of cross-derivative terms. The analysis of these cases can be found in Appendix~\ref{sec:heatcross}. 

\section{Summary and discussion} \label{sec:summary}

We presented a simple protocol appropriate for analog quantum simulation that simulates quantum states embedding solutions of parabolic PDEs, like the heat, Black-Scholes and Fokker-Planck equations. This protocol allows the preparation not only of the quantum states embedding the solutions, but also the fluxes of the solutions. The interaction terms of the Hamiltonian required in analog quantum simulation are summarised in Table~\ref{tab:label}. \\

\begin{figure}[h] 
\includegraphics[width=12cm]{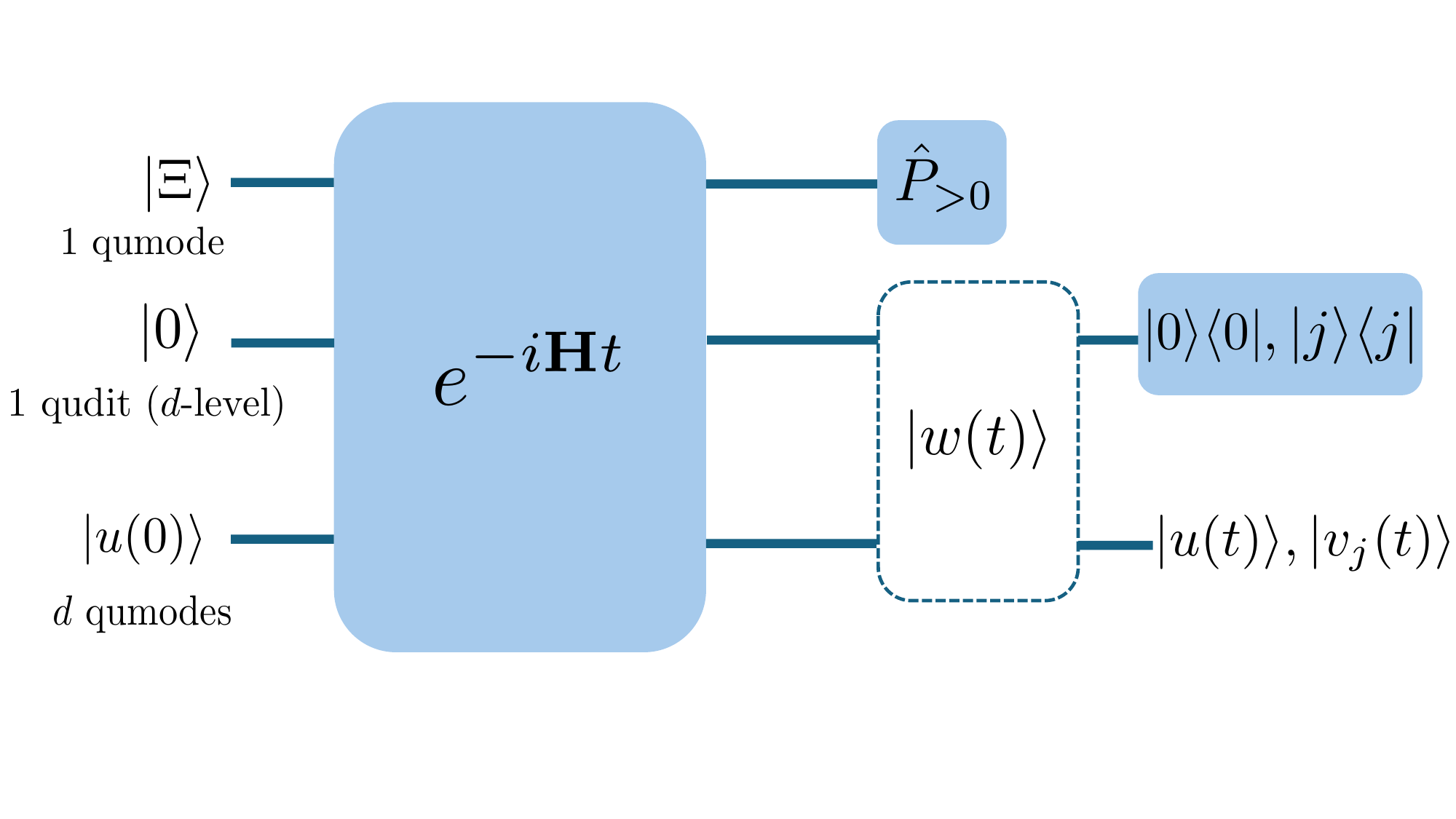} 
\caption{Protocol for preparing quantum states that embed approximate solutions $\vect{u}(t)$ of $d$-dimensional linear PDEs like the heat equation, Black-Scholes and Fokker-Planck equations. The corresponding Hamiltonian $\vect{H}$ for the different PDEs are derived in the main text, whose main interaction terms are summarised in Table~\ref{tab:label}. $|\Xi\rangle$ is the ancilla qumode requried by Schr\"odingerisation, which can be approximated by a Gaussian state whose form is fixed and not dependent on the size of the problem. After applying the unitary operator $\exp(-i\vect{H}t)$ onto initial state, the projective operator $\hat{P}_{>0}$ requires only projection onto $\eta>0$. The resulting state is $|w(t)\rangle$. To obtain $|u(t)\rangle$ one takes $|w(t)\rangle$ and makes a subsequent projective measurement onto the $|0\rangle$ qudit state. The quantum states embedding the fluxes of $u(t)$ are approximated by $|v_j(t)\rangle$, $j=1, \cdots, d$, which can be respectively obtained instead by taking $|w(t)\rangle$ and making a projective measurements onto the qudit state $|j\rangle$.}
\end{figure} 

\begin{table}[h]
    \begin{center}
    \begin{adjustwidth}{-1cm}{}
    \begin{tabular}{|c|c|c|}
        \hline
        Equation & Size of system & Hamiltonian $\vect{H}$ interaction terms \\ \hline
        1D heat & 1 qubit and 2 qumodes & $\mathbf{1}_2 \otimes \mathbf{1}_x \otimes \hat{\eta}, \quad \sigma_z \otimes \mathbf{1}_x \otimes \hat{\eta}, \quad \sigma_x \otimes \hat{p}_x \otimes \mathbf{1}_{\eta} $ \\
        1D Black-Scholes & 1 qubit and 2 qumodes & $\mathbf{1}_2 \otimes \mathbf{1}_x\otimes \hat{\eta}, \quad  \sigma_z \otimes \mathbf{1}_x \otimes \hat{\eta}, \quad \sigma_x \otimes \hat{p}_x \otimes \mathbf{1}_{\eta}$ \\
         &  & $\sigma_z \otimes \hat{p}_x \otimes \mathbf{1}_{\eta}, \quad \mathbf{1}_2 \otimes \hat{p}_x \otimes \mathbf{1}_{\eta}$ \\
        1D Fokker-Planck & 1 qubit and 2 qumodes & Same interaction terms as 1D Black-Scholes, different coefficients \\ \hline
        2D heat & 1 qudit (3 levels, qutrit) and 3 qumodes & $|j\rangle \langle j|\otimes \mathbf{1}_x \otimes \hat{\eta}$, \quad $(|0\rangle \langle j|+|j\rangle \langle 0|)\otimes \hat{p}_j \otimes \mathbf{1}_{\eta}, \qquad j=1,2 $ \\
        2D Fokker-Planck & 1 qutrit and 3 qumodes & $|j\rangle \langle j|\otimes \mathbf{1}_x \otimes \hat{\eta}, \quad (|0\rangle \langle j|+|j\rangle \langle 0|)\otimes \hat{p}_x \otimes \mathbf{1}_{\eta}, |0\rangle \langle 0|\otimes \hat{p}_x \otimes \mathbf{1}_{\eta}, \quad j=1,2$ \\ \hline
         3D heat & 1 qudit (4 levels, ququart) and 4 qumodes & $|j\rangle \langle j|\otimes \mathbf{1}_x \otimes \hat{\eta}$, \quad $(|0\rangle \langle j|+|j\rangle \langle 0|)\otimes \hat{p}_j \otimes \mathbf{1}_{\eta}, \qquad j=1,2,3$ \\
        3D Fokker-Planck & 1 ququart and 4 qumodes & $|j\rangle \langle j|\otimes \mathbf{1}_x \otimes \hat{\eta}, \quad (|0\rangle \langle j|+|j\rangle \langle 0|)\otimes \hat{p}_x \otimes \mathbf{1}_{\eta}, |0\rangle \langle 0|\otimes \hat{p}_x \otimes \mathbf{1}_{\eta}, \, j=1,2,3$  \\ \hline
        $d$-dim heat & 1 qudit d levels and d+1 qumodes & $|j\rangle \langle j|\otimes \mathbf{1}_x \otimes \hat{\eta}$, \quad $(|0\rangle \langle j|+|j\rangle \langle 0|)\otimes \hat{p}_j \otimes \mathbf{1}_{\eta}, \qquad j=1, \cdots, d$ \\  
        $d$-dim Fokker-Planck & 1 qudit d levels and d+1 qumodes & $|j\rangle \langle j|\otimes \mathbf{1}_x \otimes \hat{\eta}, \,(|0\rangle \langle j|+|j\rangle \langle 0|)\otimes \hat{p}_x \otimes \mathbf{1}_{\eta}, |0\rangle \langle 0|\otimes \hat{p}_x \otimes \mathbf{1}_{\eta}, \, j=1, \cdots, d$ \\ \hline
    \end{tabular}
    \end{adjustwidth}
    \end{center}
    \caption{A summary table of the corresponding Hamiltonians required for the analog quantum simulation of various PDEs in the $1D$, $2D$, $3D$ and $d$-dimensional scenarios. For the values of the coefficients, see the body of the paper. See Section~\ref{sec:dbs} for the Hamiltonian for the higher dimensional Black-Scholes equation.}
    \label{tab:label}
\end{table}  

For $d$-dimensional problems, we see that the size of the required quantum system scales only linearly with $d$, as does the approximation error and time. This means that if the corresponding Hamiltonian $\vect{H}$ can be realised in an analog quantum simulator, then the resource cost is linear in $d$ instead of exponential in $d$, which is more efficient than classical numerical methods which require costs exponential in $d$. \\

We note that all the key interaction terms, for any number of dimensions, are essentially of the kind $\sigma_x \otimes \hat{q}_1$ and $\sigma_z \otimes \hat{q}_2$, where $\hat{q}_1$ and $\hat{q}_2$ are quadrature operators. These correspond to the electric and magnetic dipole moment-like interactions terms as would be present in a Jaynes-Cummings-like model. One can see from Table~\ref{tab:label} that in higher dimensions, terms like $|0\rangle \langle j|+|j\rangle \langle 0|$ are essentially $\sigma_x$ except inducing state transition across the levels $|0\rangle$ and $|j\rangle$ in a qudit system. This implies that the analog quantum simulation protocol most appropriate for this setup involves qudit degrees of freedom, instead of its decomposition into qubits. Currently each interaction term in the Hamiltonian is only between one qudit and one qumode, whereas with qubits, one would instead require interacting terms with multiqubit interaction.  \\

There are various platforms that one can consider, for example cavity QED or circuit QED systems with superconductors, Rydberg atoms, acoustic modes, semiconductors, ion-traps, quantum molecular magnets and also possibly with photonic systems. \\

For proof-of-concept experiments for one-dimensional heat, Black-Scholes and Fokker-Planck equations, we require the electric and magnetic dipole moment-like interaction terms between a single qubit and a qumode. Subject to parameter tuning of the relevant terms in the Hamiltonian corresponding to the original PDE problem, these types of interactions are already in principle accessible to most of the existing platforms named above. \\

For proof-of-concept experiments for relatively low dimensional problems (two or three dimensions), we require quantum systems that can access qudits with three or four levels (qutrits and quqarts). For example, superconducting systems, ion-traps and molecular magnets  can be potential candidates, amongst others. \\

For very high $d$-dimensional problems, we require quantum systems that can realise qudits with $d$ levels, interacting with continuous modes like light or phonons. This includes for example the high number of energy levels accessible with Rydberg atoms, orbital angular momentum, frequency bins in photonic systems, and many more. \\

The purpose of this paper is only to present the basics of the idea and to deliver the message that there are multiple paths to improve the plausibility of analog quantum simulation for PDEs. We will leave detailed error analysis and plausibility analysis for different physical platforms to future work. These methods can also be straightforwardly extended to higher-order PDEs. \\

\bibliography{Ref}

\section*{Acknowledgements}
NL thanks Bill Munro for discussions and feedback on the manuscript. SJ was partially supported by the NSFC grant No. 12031013, the Shanghai Municipal Science
and Technology Major Project (2021SHZDZX0102), and the Innovation Program of Shanghai Municipal Education Commission (No. 2021-01-07-00-02-E00087). NL acknowledges funding from the Science and Technology Program of Shanghai, China (21JC1402900). Both authors acknowledge support by the NSFC grant No.12341104, the Shanghai Jiao Tong University 2030 Initiative, and the Fundamental Research Funds for the Central Universities.
\appendix 

\section{More general parabolic PDEs} \label{sec:heatcross}
We can consider more general parabolic PDEs. For example, we can consider a more general heat equation with positive definite  diffusion matrix $D=(D_{ij})_{d\times d}$:
\begin{align}
    \frac{\partial \tilde{u}}{\partial t}=\sum_{j,k=1}^d D_{jk} \frac{\partial^2 \tilde{u}}{\partial x_j \partial x_k}, \qquad \tilde{u}(0,x)=u(0,x).
\end{align}
Introducing parameters $\alpha_{ij} \in \mathbb{R}$, $1 \gg \epsilon_j>0$, we can define the following system of linear first-order PDEs
\begin{align}
    & \frac{\partial u}{\partial t}=-\sum_{i, j=1}^d \frac{\alpha_{ij}}{\epsilon_j} \frac{\partial v_i}{\partial x_j} \label{eq:heatcross1} \\
    & \frac{\partial v_i}{\partial t}=-\sum_{k=1}^d \frac{\alpha_{ik}}{\epsilon_k}\frac{\partial u}{\partial x_k}-\frac{1}{\epsilon_i^2}v_i, \label{eq:heatcross2} \qquad i=1, \cdots, d. 
\end{align}
These equations can approximate the solution $\tilde{u}(t,x)$ to precision $d^2\epsilon^2$ for $\epsilon=\max_j \epsilon_j$, if $\epsilon_j \ll 1$ and the following constraints are obeyed
\begin{align} \label{eq:dconstraint}
    D_{jk}=\sum_{i=1}^d \alpha_{ij}\alpha_{ik} \frac{\epsilon_i^2}{\epsilon_j \epsilon_k}.
\end{align}
The choices of $\alpha$ and $\epsilon$ values depend on experimental capability, and many different choices are possible that obey the same constraints. \\

Although there are many other possible systems of first-order PDEs that can also approximate the solution to $\tilde{u}(t,x)$ \cite{Natalini, Marcati, bouchut}, we will soon see the importance of choosing a system like Eqs.~\eqref{eq:heatcross1} and~\eqref{eq:heatcross2}. Defining $w(x,t)=(u(x,t), v_1(x,t), \cdots, v_d(x,t))^T$, we can rewrite Eqs.~\eqref{eq:heatcross1} and~\eqref{eq:heatcross2} as 
\begin{align} \label{eq:wheatcross}
    \frac{\partial w}{\partial t}=\sum_{j=1}^d \mathbf{M}_j \frac{\partial w}{\partial x_j}+\text{diag}(0, -1/\epsilon_1^2, \cdots, -1/\epsilon_d^2)w, \qquad w(0,x)=(u(0,x), 0, \cdots, 0).
\end{align}
Here $\mathbf{M}_j=\mathbf{M}_j^{\dagger}$ is a $(d+1)\times(d+1)$ \textit{Hermitian} matrix
\begin{align}
    \mathbf{M}_j=-\frac{1}{\epsilon_j}\sum_{k=1}^d\alpha_{kj}(|0\rangle \langle k|+|k\rangle \langle 0|).
\end{align}
The Hermiticity here is crucial. We can rewrite Eq.~\eqref{eq:wheatcross} as 
\begin{align}
    \frac{d \vect{w}(t)}{dt}=i\sum_{j=1}^d \vect{M}_j\otimes \hat{p}_j \vect{w}(t)-\sum_{j=1}^d \frac{1}{\epsilon_j^2}|j\rangle \langle j| \otimes \mathbf{1}_x \vect{w}(t)
\end{align}
and the Hermiticity of $\mathbf{M}_j$ means that the corresponding Hamiltonian used in Schr\"odingerisation is 
\begin{align} \label{eq:dheatcrossham}
& \vect{H}=\vect{A}_2 \otimes \hat{\eta}+\vect{A}_1 \otimes \mathbf{1}_{\eta}, \nonumber \\
    & \vect{A}_1=-\sum_{j=1}^d \mathbf{M}_j \otimes \hat{p}_j=\sum_{j,k=1}^d \frac{\alpha_{kj}}{\epsilon_j}(|0\rangle \langle k|+|k\rangle \langle 0|)\otimes \hat{p}_j=\vect{A}_1^{\dagger}, \qquad \vect{A}_2=\sum_{j=1}^d \frac{1}{\epsilon_j^2} |j\rangle \langle j|\otimes \mathbf{1}_x.
\end{align}
We see here that each interaction in the Hamiltonian is still an interaction between one qudit and one qumode. However, if $\mathbf{M}$ were not Hermitian, then the Hamiltonian would contain a term that is a three-body interaction between one qudit and 2 qumodes, of the form $\sigma_x \otimes \hat{p}\otimes \hat{\eta}$, where $\sigma_x$ here is a shorthand for a flip operation between two states in a qudit. For simpler experimental implementation, we therefore choose our system of first-order PDEs that give rise to a Hermitian $\mathbf{M}_j$.   \\

An even more general form of a parabolic PDE is 
\begin{align} \label{eq:generaldbs}
    \frac{\partial \tilde{u}}{\partial t}-\sum_{j,k=1}^d D_{jk} \frac{\partial^2 \tilde{u}}{\partial x_j \partial x_k}-\sum_{i=1}^d \gamma_i \frac{\partial \tilde{u}}{\partial x_i}+r\tilde{u}=0.
\end{align}
We now approximate the solution to $\tilde{u}(t,x)$ using a system of first-order PDEs. We include a new set of parameters $\delta_i$, $i=1, \cdots, d$ to Eqs.~\eqref{eq:heatcross1} and~\eqref{eq:heatcross2} in Section~\ref{sec:heatcross} to include the effect of the convection terms 
\begin{align} 
    & \frac{\partial u}{\partial t}=-\sum_{i, j=1}^d \frac{\alpha_{ij}}{\epsilon_j} \frac{\partial v_i}{\partial x_j}+\sum_{i=1}^d \frac{\delta_i}{\epsilon_i} v_i-ru, \label{eq:dbscross1} \\
    & \frac{\partial v_i}{\partial t}=-\sum_{k=1}^d \frac{\alpha_{ik}}{\epsilon_k}\frac{\partial u}{\partial x_k}-\frac{1}{\epsilon_i^2}v_i, \label{eq:dbscross2} \qquad i=1, \cdots, d. 
\end{align}
In the limit $\epsilon_j \ll 1$, Eq.~\eqref{eq:dbscross2} becomes
\begin{align}
    v_i \rightarrow -\sum_{k=1}^d \frac{\alpha_{ik} \epsilon_i^2}{\epsilon_k} \frac{\partial u}{\partial x_k}.
\end{align}
We insert this back into Eq.~\eqref{eq:dbscross1}. Comparing to Eq.~\eqref{eq:generaldbs} we see that we now have two sets of constraint equations
\begin{align} \label{eq:totalconstraint}
    & D_{jk}=\sum_{i=1}^d \alpha_{ij}\alpha_{ik} \frac{\epsilon_i^2}{\epsilon_j \epsilon_k}\nonumber \\
    & \gamma_j=-\sum_{k=1}^d \frac{\delta_j \alpha_{jk}}{\epsilon_k^2}. 
\end{align}
We note that the constraint conditions in Eqs.~\eqref{eq:dconstraint} or~\eqref{eq:totalconstraint} are not sufficient. One also needs to ensure that the first-order system \eqref{eq:heatcross1}-\eqref{eq:heatcross1}  or Eqs.~\eqref{eq:dbscross2},~\eqref{eq:dbscross2} is hyperbolic \cite{Natalini, Marcati, bouchut}, so the system is stable.  How to choose such a system with both constraints is  non-trivial and remains to be worked out, maybe  individually for different PDEs.\\

We will see that this simple model provides the basis for the analog quantum simulation of multidimensional Black-Scholes and Fokker-Planck equations with anisotropic diffusion.


\end{document}